\renewcommand\expandafter\subsection\expandafter
	\newcommand\@fb@secFB{\FloatBarrier
		\gdef\@fb@afterHHook{\@fb@topbarrier \gdef\@fb@afterHHook{}}}%
	\g@addto@macro\@afterheading{\@fb@afterHHook}%
	\gdef\@fb@afterHHook{}%
\definecolor{color1}{RGB}{230,57,70}
\definecolor{color2}{RGB}{29,53,87}
\definecolor{color3}{RGB}{69,123,157}
\crefname{lemma}{lemma}{lemmas}
\crefname{proposition}{proposition}{propositions}
\crefname{definition}{definition}{definitions}
\crefname{theorem}{theorem}{theorems}
\crefname{conjecture}{conjecture}{conjectures}
\crefname{corollary}{corollary}{corollaries}
\crefname{example}{example}{examples}
\crefname{section}{section}{sections}
\crefname{appendix}{appendix}{appendices}
\crefname{figure}{fig.}{figs.}
\crefname{equation}{eq.}{eqs.}
\crefname{table}{table}{tables}
\crefname{item}{property}{properties}
\crefname{remark}{remark}{remarks}
\crefname{problem}{}{}
\newtheorem{theorem}{Theorem}
\newtheorem{lemma}[theorem]{Lemma}
\newcommand\field\mathds
\newcommand\op\mathbf
\newcommand\1{\mathds 1}
\DeclareMathOperator{\BigO}{O}
\newcommand\ii{\mathds{i}}
\newcommand\ee{\mathrm{e}}
\newcommand\bitalpha{A}
\title{\mbox{Fast Black-Box Quantum State Preparation}}
\author{Johannes Bausch\thanks{CQIF, DAMTP, University of Cambridge, UK. \texttt{jkrb2@cam.ac.uk}}}
\date{May 2021\thanks{Revised June 2022}}
\begin{document}

\maketitle

\begin{abstract}
    Quantum state preparation is an important ingredient for other higher-level quantum algorithms, such as Hamiltonian simulation, or for loading distributions into a quantum device to be used e.g.\ in the context of optimization tasks such as machine learning.
    Starting with a generic ``black box'' method devised by \citeauthor{Grover2000}, which employs amplitude amplification to load coefficients calculated by an oracle, there has been a long series of results and improvements with various additional conditions on the amplitudes to be loaded, culminating in \citeauthor{Sanders2019}'s work which avoids almost all arithmetic during the preparation stage.
    
    In this work, we construct an optimized black box state loading scheme with which various important sets of coefficients can be loaded significantly faster than in $\BigO(\sqrt N)$ rounds of amplitude amplification---up to only $\BigO(1)$ many.
    We achieve this with two variants of our algorithm. The first employs a modification of the oracle from \citeauthor{Sanders2019}, which requires fewer ancillas ($\log_2 g$ vs $g+2$ in the bit precision $g$), and fewer non-Clifford operations per amplitude amplification round within the context of our algorithm. The second utilizes the same oracle, but at slightly increased cost in terms of ancillas ($g+\log_2g$) and non-Clifford operations per amplification round. As the number of amplitude amplification rounds enters as multiplicative factor, our black box state loading scheme yields an up to exponential speedup as compared to prior methods. This speedup translates beyond the black box case.
\end{abstract}

\section{Introduction}
The ability to prepare an arbitrary quantum state is a fundamental building block for many higher-level quantum algorithms, for instance in sparse linear algebra calculations \cite{Harrow2009}, quantum walks \cite{Kempe2003,Santha2008,Berry2009}, machine learning and optimization \cite{Brandao2017,Bravyi2019}, and quantum chemistry or condensed matter simulation \cite{Berry2015,Jones2012}.
%TODO add some more
Starting with \citeauthor{Grover2000} \cite{Grover2000}, a series of  techniques have been developed, based on amplitude amplification \cite{Soklakov2006,Sanders2019,Plesch2011} for the black box regime, or alternative approaches \cite{Mottonen2004,Araujo2020,Grover2002} in case the amplitudes are known \`a priori.
Yet aims and limitations considered in these two regimes differ: generic black box algorithms allow loading amplitudes that stem from oracle subroutines, and others only prepare states for a particular set of amplitudes that has to be known from the outset. Furthermore, precision, requirements in terms of ancillas or intermediate measurements, or restrictions that have to be placed on the amplitudes to be loaded are not standardised, and vary from case to case.

In brief, quantum state preparation addresses the task to transform an amplitude vector $\alpha = (\alpha_0, \ldots, \alpha_{N-1})$ into a quantum state close to
\[
    \frac{1}{\| \alpha \|_2} \sum_{i=0}^{N-1} \alpha_i \ket i
    \quad\text{or}\quad
    \frac{1}{\sqrt{\| \alpha \|_1}} \sum_{i=0}^{N-1} \sqrt{\alpha_i} \ket i.
\]
The first ``linear coefficient'' problem, as dubbed by \cite{Sanders2019}, is the one we consider in this work as the generic quantum state loading task.
In this form, it is a crucial component e.g.~for quantum linear algebra routines (HHL, to prepare an initial state $\ket b$ for which to solve $A\ket x = \ket b$), or random walks (to prepare an initial state whose amplitudes can be given by an arithmetic formula such as $\alpha_x \sim \sin(\pi (x+1)/(N+1))$ for a vertex $x$ of a graph, \cite{Childs2009}).

If the amplitudes $\alpha_i$ are all known beforehand, circuits that hard-code the given data into gates can be developed \cite{Mottonen2004,Araujo2020}. 
As shown in \cite{Mottonen2004}, to load an $N$-amplitude state requires a circuit depth of $\Omega(N)$.
In \cite{Araujo2020}, the authors trade this depth (which is exponential in the number of qubits) for $N$ ancillas, and an $\Theta(\log^2N)$ depth; yet this tradeoff has to be taken with a grain of salt, as long-range gates, when broken down to an e.g.~linear (or grid-like) qubit topology require $\sim N$ swaps to be performed throughout, and the resulting state is left entangled with the ancillas, i.e.\ a superposition of states $\ket i\ket{\psi_i}$ instead of just basis states $\ket i$.
Furthermore, learning-based methods exist that yield circuits which approximately produce the target state, for instance based on generative adversarial neural networks \cite{Zoufal2019}. Since the ansatz circuit can be  chosen freely, depending on the required accuracy ciruit depths less than $\BigO(N)$ and without additional ancilla requirements can be achieved.

Instead of the case where the amplitude vector $\alpha$ is known from the outset, the focus in this paper is on black box algorithms based on oracles access to $\alpha$: as is common, the procedure for creating such a state is based on the existence of an oracle $\op U_\alpha$, which acts on a bipartite state
\begin{equation}\label{eq:Ua}
 \op U_\alpha \ket i\ket j = \ket i \ket{j \oplus \bitalpha_i},
\end{equation}
where we assume $\bitalpha_i$ is a $g$-bit approximation to $\alpha_i$, and $\ket j$ is a $g$-bit register.
With the aid of amplitude amplification, this oracle is then raised to an overall procedure to produce a state with amplitudes close to $\alpha$.
To date, the best-known asymptotic runtime (measured in the number of necessary oracle calls) is $\sim \sqrt{N}$ \cite{Grover2000,Soklakov2006,Sanders2019,Plesch2011}.

\paragraph{Our Contribution.}
In this paper, we improve upon the state-of-the art, i.e.~\citeauthor{Sanders2019}'s black box algorithm, by significantly improving loading times for black box quantum states if some prior knowledge about the amplitudes is known. This prior information is ``cheaply'' available from the oracle, in the form of the amplitudes' \emph{average bit weight}: if the amplitudes are given as a sequence of binary numbers, the expected value of the $j$\textsuperscript{th} bit is simply the average over all $j$\textsuperscript{th} bits in the sequence. An approximation to these expected bit values can be measured quickly by a few oracle calls and one-qubit measurements.
We prove that for various distributions this additional piece of information can significantly reduce the necessary number of amplitude amplification rounds, from $\BigO(\sqrt N)$ to up to $\BigO(1)$, i.e.\ an up to \emph{exponential speedup} for state preparation, and in various settings relevant for real-world scenarios (cf.~\cref{tab:speedups,sec:explicit}).

We develop our black box state loading algorithm based on a variant of the \cite{Sanders2019,Grover2002} oracle, which, within the context of our algorihtm, is significantly more frugal in terms of ancillas required for the same precision to be loaded: for instance for amplitudes with 32 bits of precision, \cite{Sanders2019} require at least 34 additional ancillas, whereas for our algorithm 5 suffice---at the cost of requiring a slightly deeper circuit.\footnote{We  did not take any device topology into account in this comparison, where more ancillas often mean more reshuffling of information, with a correspondingly higher resulting circuit depth.}
As the fact that we utilize two different oracles means we are not comparing like to like, we also construct a variant of our algorithm based on \emph{the same} oracle as in \cite{Sanders2019}. While this comes at a slight overhead in terms of ancillas and non-Clifford gates required (logarithmic in the precision of the amplitudes to be loaded, to be precise), the asymptotic speedups in terms of number of amplitude amplification rounds prevail and often outstrip the additional resource requirements necessary per round. This means that even in case of utilizing the same oracle as \cite{Sanders2019} an exponential speedup is achievable, as compared to prior methods.

Notably, our proposed improvement goes \emph{beyond} the black box setting.
If the oracle is, for instance, an arithmetic circuit producing the weights $\alpha_i \sim \sin(\pi(i+1)/(N+1))$ for $i=0,\ldots,N$ for a discrete quantum walk application, those weights might well be known beforehand, so pre-compiled state preparation circuits can be applied; but instead of a resulting depth $\Omega(N)$ circuit, already existing black-box algorithms reduce the asymptotics to $\BigO(\sqrt N)$; our algorithm further reduces the runtime to $\BigO(1)$, translating the potential exponential speedup even to the case when amplitudes are known beforehand.

Finally, we generalise our state loading procedure as a generic subroutine able to prepare states like
\[
    \sum_{i=0}^{N-1} \left( \sum_{j=1}^g b_{ij} w_{\!j} \right) \ket i
\]
for some boolean matrix $(b_{ij})_{1 \le i, j \le g}$ and weight vector $(w_{\!j})_{1\le j\le g}$, potentially useful as a linear algebra subroutine, and where the same optimisations for given prior knowledge about the $b_{ij}$ mentioned in the last paragraph can be applied.

\section{Gradient-State Based Black-Box Quantum State Preparation}
The algorithm we propose shares similarities with \cite{Sanders2019}'s method, in the sense that it is a two-step protocol based on preparation of an intermediate state which maps the oracle's amplitudes into an ancillary system, and a second step that collates the ancillary amplitudes into the final form. In either step, amplitude amplification is employed to transform one state to the other.

Yet while \citeauthor{Sanders2019} use a large ancillary register where the $2^g$ dimensions represent $g$ bit numbers on a linear scale $0, 2^{-g}, 2\times2^{-g}, 3\times2^{-g}, \ldots, (2^g-1)\times2^{-g}$, we identify the ancillary dimensions with a logarithmic scale $1/2,1/4,1/8,\ldots,2^{-g}$.
This mapping has two advantages: i.~it requires fewer qubits ($\lceil \log_2 g \rceil$ instead of $g$ many), and ii.~it allows knowledge of the oracle's average bit weight to be exploited, often reducing the necessary number of amplitude amplification rounds below \citeauthor{Sanders2019} and \citeauthor{Grover2000}'s $\BigO(\sqrt N)$.

In \cref{sec:algorithm} we present the state loading protocol in detail, explaining the two stages, and deriving explicit runtime bounds.
\Cref{sec:error} then contains a discussion of error bounds and success probabilities, and in \cref{sec:resources} we describe how the new ingredients that our proposal is based on---preparing the logarithmic scale in the ancilla state, a reduction to the \cite{Sanders2019,Grover2002} oracle, and computing a swap network that replaces \citeauthor{Sanders2019}'s comparator subroutine---can be constructed using few additional resources.
\Cref{sec:bootstrapping} then discusses how some initial calls to the oracle can be used to produce an optimised initial state that results in a reduced number of necessary amplitude amplification rounds.
We further present comparisons for number of ancillas and non-Clifford resources in \cref{tab:precision}, and speedups obtained by using the optimised bootstrapping technique in \cref{tab:speedups}.

\subsection{The Algorithm}\label{sec:algorithm}
Throughout the paper, $\| x \|_p \coloneqq (\sum_i |x_i|^p)^{1/p}$ denotes the standard $\ell_p$ norm.
Given a discrete $N$-element list of nonnegative amplitudes $\alpha = (\alpha_0, \ldots, \alpha_{N-1})$ with $\| \alpha \|_2 = 1$,
we let $2^g \bitalpha_i \in \field N$ be a binary approximation to $\alpha_i$ to $g \in \field N$ bits of precision (rounded towards zero), and let analogously $\bitalpha=(A_0, A_1, \ldots, A_{N-1})$ be a $g$-bit approximation of the entire amplitude vector.
The $j$\textsuperscript{th} bit of $\bitalpha_i$ is then denoted $\bitalpha_{ij}$, in Little Endian order, i.e.\ such that $\bitalpha_{i0}$ is the most significant bit (denoting the $1/2$'s); with this we simply have $\bitalpha_i = \sum_j 2^{-j} \bitalpha_{ij}$.
If not stated differently we assume that $N=2^n$ for some $n\in\field N$; but the construction is generalised readily to non-power-of-2 numbers (e.g.\ by padding the vector $\alpha$ suitably by zeroes).
Our goal is to prepare the state
\begin{equation}\label{eq:final-state}
    \ket{\bitalpha} = \frac{1}{\| \bitalpha \|_2} \sum_{i=0}^{N-1} \bitalpha_i \ket i.
\end{equation}
How far does this state deviate from $\ket\alpha = \sum_{i=0}^{N-1} \alpha_i \ket i$, in trace distance $\sqrt{1-|\braket{\alpha}{\bitalpha}|^2}$?
We first note
\[
    | \braket{\alpha}{\bitalpha} | = \frac{1}{\| \bitalpha \|_2} \sum_{i=0}^{N-1} \alpha_i \bitalpha_i \overset*\ge \frac{1}{\| \bitalpha \|_2} \sum_{i=0}^{N-1} \bitalpha_i \bitalpha_i = \| \bitalpha \|_2 \ge \sqrt{1- 2\times 2^{-g} \| \alpha \|_1},
\]
where in the step marked with $(*)$ we have made use of the fact that the $\bitalpha_i$ is rounded towards zero.
Then
\begin{equation}\label{eq:alpha-g-tracedist}
\sqrt{1-|\braket{\alpha}{\bitalpha}|^2} \le 2^{(1-g)/2} \sqrt{\| \alpha \|_1}. 
\end{equation}

In order to obtain a copy of this state $\ket{\bitalpha}$ with high likelihood, we first define the amplitude gradient state
\begin{equation}\label{eq:gradient-state}
    \ket{g}_G \coloneqq \frac{1}{\sqrt{2^g-1}} \sum_{j=0}^{g-1} 2^{(g-j-1)/2} \ket j,
\end{equation}
such that e.g.
\[
    \ket{1}_G = \ket 0, \ \ \ \ket{2}_G = \frac{1}{\sqrt 3} \left( \sqrt2 \ket0 + \ket1 \right), \ \ \  
    \ket{3}_G = \frac{1}{\sqrt 7} \left( \sqrt4 \ket0 + \sqrt2 \ket1 + \ket2 \right), \ \ \ \text{etc.}
\]
We assume for now that this state can be prepared from a qudit ancilla $\ket0 \in \field C^g$ with a suitable unitary operation $\op G$ (for implementation details see \cref{sec:g-state}), where we can again for simplicity assume that $g$ is a power of 2.
A combination of Hadamard operations and $\op G$ then allows us to prepare the initial state
\begin{equation}\label{eq:initial-state}
    \ket s \coloneqq  \left[ \op H^{\otimes n}\otimes \op G \right] \ket0^{\otimes n} \ket0 = \frac{1}{\sqrt N} \sum_{i=0}^{N-1} \ket i \ket g_G.
\end{equation}

The Grover oracle for this setup is given by a unitary operation, defined on basis states $\ket i\ket j \in (\field C^2)^{\otimes n} \otimes \field C^g$ via
\begin{equation}\label{eq:Uw}
    \op U_\omega \ket i \ket j = (-1)^{\bitalpha_{ij}} \ket i \ket j.
\end{equation}
In other words, $\op U_\omega$ can compute bits of the coefficient $\alpha_i$, and flips the sign indicating the $j$\textsuperscript{th} bit of the $i$\textsuperscript{th} index if and only if $\bitalpha_{ij} = 1$.

For instance on the initial state $\ket s$, $\op U_\omega$ takes the action
\[
    \op U_\omega \ket s = \frac{1}{\sqrt N} \sum_{i=0}^{N-1} \ket i \times \frac{1}{\sqrt{2^{g}-1}} \sum_{j=0}^{g-1} 2^{(g-j-1)/2} (-1)^{\bitalpha_{ij}} \ket j.
\]
There is various ways such an oracle unitary with phase kickback can be constructed; the most efficient means will depend on the target distribution $\alpha$; for now we simply assume its existence.
We emphasize that $\op U_\omega$ is naturally not the same oracle as in \cite{Grover2002,Sanders2019} that can write out an entire amplitude $\op U_\mathrm{amp}\ket i\ket z=\ket i\ket{z \oplus A_j}$ in one query.
While the black box state preparation algorithm we present in this paper is fundamentally based in $\op U_\omega$, we discuss in \cref{sec:Uw} how to proceed from $\op U_\mathrm{amp}$ as a starting point, which will incur both a qubit and gate overhead. In terms of query complexity to $\op U_\omega$ or $\op U_\mathrm{amp}$, however, the two settings will prove to be identical.

The final ingredient is a variant of the Grover diffusion operator, namely
\begin{equation}\label{eq:Us}
    \op U_s \coloneqq \left[ \op H^{\otimes n} \otimes \op G \right] \left[ 2\ketbra 0 - \1 \right] \left[ \op H^{\otimes n} \otimes \op G \right]^\dagger.
\end{equation}

Defining the intermediate target state
\begin{equation}\label{eq:omega}
    \ket\omega \coloneqq \frac{1}{\sqrt{ \| \bitalpha \|_1 }} \sum_{i=0}^{N-1} \ket i \sum_{j=0}^{g-1} 2^{-(j+1)/2} \bitalpha_{ij} \ket j
\end{equation}
we can---as a well-known step---equate both $\op U_s$ and $\op U_\omega$ with two Householder operations.
\begin{lemma}\label{lem:householder}
$\op U_s = 2 \ketbra s - \1$, and $\op U_\omega = \1 - 2 \ketbra\omega$ when restricted to the subspace spanned by $\ket s$ and $\ket\omega$.
\end{lemma}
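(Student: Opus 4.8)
The plan is to prove the two identities separately. Both express the elementary fact that the reflection through a unit vector $\ket\psi$ equals $2\ketbra\psi - \1$; the first is immediate from the definition of $\ket s$, while the second needs a short computation confirming that $\op U_\omega$ actually reflects through $\ket\omega$ on the (at most) two-dimensional subspace $V$ spanned by $\ket s$ and $\ket\omega$.

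For $\op U_s$: put $W \coloneqq \op H^{\otimes n}\otimes \op G$, which is unitary because $\op H$ and $\op G$ are. Abbreviating $\ket0^{\otimes n}\ket0$ by $\ket0$, \eqref{eq:initial-state} gives $\ket s = W\ket0$, hence
\[
    \op U_s = W\bigl(2\ketbra0 - \1\bigr)W^\dagger = 2\,(W\ket0)(W\ket0)^\dagger - WW^\dagger = 2\ketbra s - \1 ,
\]
and there is nothing subtle here.

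For $\op U_\omega$: by \eqref{eq:Uw} the operator $\op U_\omega$ is diagonal in the computational basis and equals $\1 - 2P$, where $P \coloneqq \sum_{i,j:\,\bitalpha_{ij}=1}\ketbra i\otimes\ketbra j$ is the orthogonal projector onto the span of the ``marked'' basis states $\ket i\ket j$ with $\bitalpha_{ij}=1$. It therefore suffices to show that $P$ and $\ketbra\omega$ act identically on $V$, and by linearity it is enough to check this on $\ket s$ and on $\ket\omega$. First, every basis ket occurring in the expansion \eqref{eq:omega} of $\ket\omega$ has a coefficient proportional to $\bitalpha_{ij}$, hence $\bitalpha_{ij}=1$; so $\ket\omega$ lies in the range of $P$, giving $P\ket\omega = \ket\omega = \ketbra\omega\,\ket\omega$. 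Second, expanding $\ket s$ via \eqref{eq:initial-state} and \eqref{eq:gradient-state} and applying $P$ retains exactly the terms with $\bitalpha_{ij}=1$; rewriting the gradient weight as $2^{(g-j-1)/2} = 2^{g/2}\,2^{-(j+1)/2}$, the surviving sum is a scalar multiple of the sum defining $\ket\omega$, so $P\ket s = c\,\ket\omega$ with $c = 2^{g/2}\sqrt{\|\bitalpha\|_1}\,/\,\sqrt{N(2^g-1)}$. Taking the inner product with $\ket\omega$ and using $P\ket\omega = \ket\omega$ and $\braket{\omega}{\omega}=1$ yields $\braket{\omega}{s} = c$, hence $P\ket s = \braket{\omega}{s}\,\ket\omega = \ketbra\omega\,\ket s$. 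Thus $P$ and $\ketbra\omega$ agree on $V$, so $\op U_\omega$ coincides with $\1 - 2\ketbra\omega$ on $V$; in particular $V$ is $\op U_\omega$-invariant.

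The only point that needs care is the bookkeeping of normalization constants: lining up the gradient weight $2^{(g-j-1)/2}$ in $\ket s$ with the weight $2^{-(j+1)/2}$ in $\ket\omega$ through the common factor $2^{g/2}$, and using $\|\bitalpha\|_1 = \sum_i \bitalpha_i = \sum_{i,j} 2^{-(j+1)}\bitalpha_{ij}$ so that both $\ket s$ and $\ket\omega$ are genuine unit vectors and $c$ comes out as stated. No genuine obstacle arises beyond this; the rest is linear algebra.
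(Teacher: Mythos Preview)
Your proof is correct and follows essentially the same approach as the paper: verify the identity for $\op U_s$ by unitary conjugation, and for $\op U_\omega$ check its action on the two spanning vectors $\ket s$ and $\ket\omega$. The only organizational difference is that you factor $\op U_\omega = \1 - 2P$ through the projector onto the marked basis states and show $P|_V = \ketbra\omega|_V$, which lets you read off $\braket{\omega}{s}=c$ from $P\ket s = c\ket\omega$ without the paper's separate inner-product computation; the paper instead applies $\op U_\omega$ to $\ket s$ directly via the add-and-subtract trick and then computes $\braket{\omega}{s}$ from scratch to identify the coefficient.
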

\begin{proof}
By \cref{eq:Us}, we have that
\[
    \op U_s = 2 \left( \left[\op H^{\otimes n} \otimes \op G\right] \ketbra 0 \left[\op H^{\otimes n} \otimes \op G\right]^\dagger\right) - \1
    =2 \ketbra s - \1
\]
by definition of the initial state $\ket s$ in \cref{eq:initial-state}.
% NEW IDEA
To prove that $\op U_\omega$ is also a Householder transformation within the subspace spanned by the initial state $\ket s$ (\cref{eq:initial-state}) and the intermediate target state $\ket\omega$ (\cref{eq:omega}), we proceed by direct calculation.
\begin{align}
    \op U_\omega \ket s
    &= \frac{1}{\sqrt N} \sum_{i=0}^{N-1} \ket i \times \frac{1}{\sqrt{2^{g}-1}} \sum_{j=0}^{g-1} 2^{(g-j-1)/2} (-1)^{\bitalpha_{ij}} \ket j \nonumber\\
    &\overset*= \ket s + \frac{1}{\sqrt N} \sum_{i=0}^{N-1}\ket i \times
    \frac{1}{\sqrt{2^g-1}} \sum_{j=0}^{g-1} \left[
    2^{(g-j-1)/2} (-1)^{\bitalpha_{ij}} \ket j - 2^{(g-j-1)/2} \ket j
     \right] \nonumber\\
    &= \ket s - \frac{2}{\sqrt N} \frac{1}{\sqrt{2^g-1}} \sum_{i=0}^{N-1} \ket i \sum_{j=0}^{g-1}2^{(g-j-1)/2} \bitalpha_{ij} \ket j \nonumber\\ 
    &= \ket s - 2\frac{\sqrt{\| \bitalpha \|_1}}{\sqrt{N}}\frac{2^{g/2}}{\sqrt{2^g-1}} \ket\omega, \label{eq:U_w_on_s}
\end{align}
where in the step marked with $*$ we have added and subtracted $\ket s$; the term on the right hand side of the square brackets in that line stems from the summand in the gradient state, \cref{eq:gradient-state}.
As 
\begin{align}
    \braket{\omega}{s} &= \frac{1}{\sqrt N\sqrt{\| \bitalpha\|_1}} \sum_{i=0}^{N-1}\sum_{i'=0}^{N-1} \braket{i}{i'} \frac{1}{\sqrt{2^g-1}} \sum_{j=0}^{g-1}\sum_{j'=0}^{g-1}2^{(g-j'-1)/2}2^{-(j+1)/2}\bitalpha_{ij}\braket{j}{j'} \nonumber\\
    &= \frac{1}{\sqrt N\sqrt{\|\bitalpha\|_1}}\frac{2^{g/2}}{\sqrt{2^g-1}} \sum_{i=0}^{N-1}\underbrace{\sum_{j=0}^{g-1}2^{-j-1}\bitalpha_{ij}}_{\equiv \bitalpha_i} \nonumber\\[-4mm]
    &=\frac{\sqrt{\| \bitalpha\|_1}}{\sqrt N} \frac{2^{g/2}}{\sqrt{2^g-1}}, \label{eq:<w|s>}
\end{align}
we see that \cref{eq:U_w_on_s} further simplifies to $\op U_\omega\ket s = \ket s - 2\braket{\omega}{s}\ket w$.

The last operation to verify is
\begin{align*}
    \op U_\omega \ket\omega = \frac{1}{\sqrt{\|\bitalpha\|_1}} \sum_{i=0}^{N-1} \ket i \sum_{j=0}^{g-1} 2^{-(j+1)/2} \bitalpha_{ij} (-1)^{\bitalpha_{ij}} \ket j
    = -\ket\omega,
\end{align*}
as $\bitalpha_{ij}(-1)^{\bitalpha_{ij}} = -\bitalpha_{ij}$.
\end{proof}

In combination, \cref{lem:householder} thus forms the Grover iteration operator $\op U_s \op U_\omega$, which performs a rotation in the space spanned by $\ket s$ and $\ket\omega$.

We observe that the fact that \cref{lem:householder} only shows that $\op U_\omega$ has Householder form only when restricted to this hyperplane is also the case for normal Grover search, e.g.\ in the case when there are multiple marked target elements. Apart from the explicit proof we just gave for \cref{lem:householder}, it is easy to see that this must be the case, as the oracle $\op U_\omega$ given in  \cref{eq:Uw} has precisely the same form as the standard Grover oracle; the explicit form of the two vectors $\ket s$ and $\ket \omega$ thus emerge from the Grover diffusion operator $\op U_s$, which implicitly defines the initial state $\ket s$.

Starting from this initial state $\ket s$, we can thus use these Grover iterations to approach the intermediate target state $\ket\omega$ step-by-step. Instead of deriving the runtime of this variant of Grover search (which is by now standard and can e.g.\ be found in \cite[Sec.6.1.3]{Nielsen_and_Chuang}), we rely on the state-of-the-art in fixed-point amplitude amplification in order to not overshoot the target \cite{Yoder2014}.
We cite their result here in a concise form for completeness.
\begin{lemma}[Fixed-Point Amplitude Amplification, \cite{Yoder2014}]
	Let $n\in\field N$. We are given a unitary operator $\op S$ that prepares an initial state $\ket s = \op S\ket 0^{\otimes n}$, and a target state $\ket T$ with $\braket{T}{s}=\sqrt\lambda \ee^{\ii\xi}$, $\lambda,\xi\in\field R$, and an oracle $\op U\ket T\ket b=\ket T\ket{b\oplus 1}$ and $\op U\ket{\bar T}\ket b=\ket{\bar T}\ket b$ for $\braket{\bar T}{T}=0$.
	Then there exists a fixed point amplitude amplification procedure that, when starting from the initial state $\ket s$, extracts the target state $\ket T$ with success probability $\ge 1 - \delta^2$ for some $\delta\in[0, 1]$, utilizing $L \sim \log(2/\delta) / \sqrt \lambda$ calls to the oracle $\op U$ interleaved by $2L$ calls to $\op S$.
\end{lemma}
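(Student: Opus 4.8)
The statement is the optimal-query fixed-point amplitude-amplification theorem of \cite{Yoder2014}, which the paper invokes as a black box; I would reprove it by the standard three-step quantum-signal-processing argument.

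\textbf{Step 1 (reduction to two dimensions).} The amplification procedure is a product of generalized reflections $\op S_T(\phi)\coloneqq\1-(1-\ee^{\ii\phi})\ketbra T$ about the target and $\op S_s(\phi)\coloneqq\op S\big(\1-(1-\ee^{\ii\phi})\ketbra{0^{\otimes n}}\big)\op S^\dagger=\1-(1-\ee^{\ii\phi})\ketbra s$ about the initial state. Both act as the identity outside the plane $\mathcal H_2=\operatorname{span}\{\ket T,\ket{\bar T}\}$, into which the global phase $\ee^{\ii\xi}$ is absorbed by writing $\ket s=\sqrt\lambda\,\ket T+\sqrt{1-\lambda}\,\ket{\bar T}$. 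Hence the state $\ket{s_L}\coloneqq\big(\prod_{k=1}^{L}\op S_s(\beta_k)\op S_T(\alpha_k)\big)\ket s$ stays in $\mathcal H_2$ and is computed by an explicit product of $2L$ many $2\times 2$ unitaries that depend on $\lambda$ only through $\sqrt\lambda$.

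\textbf{Step 2 (Chebyshev identity --- the crux).} Set $\gamma^{-1}\coloneqq T_{1/L}(1/\delta)=\cosh\!\big(\tfrac1L\operatorname{arccosh}(1/\delta)\big)$ and, following \cite{Yoder2014}, take the phase schedule $\alpha_k=-\beta_{L+1-k}=2\operatorname{arccot}\!\big(\tan(2\pi k/L)\,\sqrt{1-\gamma^2}\big)$ for $k=1,\dots,L$ (with $L$ odd, as the construction requires). The claim to prove is the closed form
\begin{equation*}
    1-|\braket{T}{s_L}|^2 \;=\; \delta^2\,T_L\!\big(\,T_{1/L}(1/\delta)\,\sqrt{1-\lambda}\,\big)^2 .
\end{equation*}
I would verify this either by induction on $L$ --- multiplying the pair $\op S_s(\beta_k)\op S_T(\alpha_k)$ onto the running $2\times2$ product and matching the emerging three-term recursion to the Chebyshev recurrence $T_{m+1}(x)=2xT_m(x)-T_{m-1}(x)$ --- or, more transparently, via the product factorization of $T_L$ over the nodes $\cos\!\big((2j-1)\pi/(2L)\big)$: the explicit phase schedule is engineered precisely so that each factor of the telescoping matrix product contributes exactly one linear factor of that Chebyshev polynomial, with $\delta$ and $\lambda$ entering only through the rescaling of its argument. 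This step is the main obstacle --- it is the constructive (``achievability'') half of quantum signal processing specialized to the optimal amplifying polynomial, and essentially all of the phase, branch-cut and square-root bookkeeping lives here.

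\textbf{Step 3 (readout and query count).} Since $|T_L(x)|\le 1$ for $|x|\le 1$, the identity gives $1-|\braket{T}{s_L}|^2\le\delta^2$ whenever $T_{1/L}(1/\delta)\sqrt{1-\lambda}\le 1$. Writing $T_{1/L}(1/\delta)=\cosh\!\big(\tfrac1L\operatorname{arccosh}(1/\delta)\big)$ and $1/\sqrt{1-\lambda}=\cosh(\operatorname{arctanh}\sqrt\lambda)$, this is equivalent to $\tfrac1L\operatorname{arccosh}(1/\delta)\le\operatorname{arctanh}\sqrt\lambda$; with the elementary estimates $\operatorname{arccosh}(1/\delta)\le\ln(2/\delta)$ and $\operatorname{arctanh}\sqrt\lambda\ge\sqrt\lambda$ it then suffices to take $L\ge\ln(2/\delta)/\sqrt\lambda$, i.e.\ $L\sim\log(2/\delta)/\sqrt\lambda$. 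For the cost, the product $\prod_{k=1}^{L}\op S_s(\beta_k)\op S_T(\alpha_k)$ has $L$ factors $\op S_T(\alpha_k)$, each realized with $\BigO(1)$ calls to $\op U$ (wrap $\op U$ around a single-qubit phase gate acting on the flag register of the lemma), and $L$ factors $\op S_s(\beta_k)$, each of the form $\op S(\cdots)\op S^\dagger$ and hence using two calls to $\op S$ --- giving $\BigO(L)\sim\log(2/\delta)/\sqrt\lambda$ oracle calls interleaved by $2L$ calls to $\op S$, as stated.
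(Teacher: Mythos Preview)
You correctly identify that the paper does not prove this lemma at all: it is quoted verbatim from \cite{Yoder2014} and used as a black box, with the surrounding text only unpacking what ``success probability $\ge 1-\delta^2$'' means operationally. So there is no paper-side argument to compare against.

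Your three-step reconstruction is the standard Yoder--Low--Chuang proof and is sound. The reduction to the two-dimensional invariant subspace (Step~1) is routine; the Chebyshev closed form $1-|\braket{T}{s_L}|^2=\delta^2\,T_L\!\big(T_{1/L}(1/\delta)\sqrt{1-\lambda}\big)^2$ in Step~2 is exactly the achievability statement of \cite{Yoder2014}, and your two proposed verification routes (three-term recurrence matching, or the product-over-Chebyshev-nodes factorization) are both viable --- the latter being how the original paper derives the phase schedule in the first place. The readout in Step~3 is clean: the hyperbolic identities $T_{1/L}(1/\delta)=\cosh\!\big(\tfrac1L\operatorname{arccosh}(1/\delta)\big)$ and $\cosh(\operatorname{arctanh}\sqrt\lambda)=1/\sqrt{1-\lambda}$ are correct, and the two elementary bounds $\operatorname{arccosh}(1/\delta)\le\ln(2/\delta)$ and $\operatorname{arctanh}\sqrt\lambda\ge\sqrt\lambda$ give the stated $L\sim\log(2/\delta)/\sqrt\lambda$. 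The resource accounting ($L$ oracle calls via phase kickback on the flag, $2L$ calls to $\op S$ from the conjugation $\op S(\cdots)\op S^\dagger$ in each $\op S_s(\beta_k)$) matches the lemma's claim.

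One minor caveat: the explicit phase schedule you quote is a sketch-level paraphrase of the formulas in \cite{Yoder2014}; if you were to write this out in full you would want to double-check the precise index conventions and the oddness constraint on $L$, but none of that affects the asymptotic conclusion the present paper actually uses.
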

A success probability of $1-\delta^2$ means that after the given number of rounds the fixed-point amplification procedure results in a state $\ket{T'}$ which satisfies
\[
1-\delta^2 = \operatorname{Tr}(\ketbra T\ketbra{T'}) = |\braket{T}{T'}|^2.
\]

For us, $\op S = \op H^{\otimes n} \otimes \op G$, as is clear from \cref{eq:Us,eq:initial-state}, and the oracle $\op U=\op U_\omega$ is the same with a phase kickback mechanism implied (as already explained).
Further translating this terminology to our setting, we start with the initial state $\ket s$, and amplify to an intermediate target state $\ket T=\ket\omega$; where then $\ket{\omega'}$ denotes the state we \emph{actually} reach after the amplitude amplification has been applied.

Denoting the number of fixed point amplification rounds with $L_1$ (where the subscript indicates that our overall algorithm will have a secondary stage with similar quantities subscripted by $2$, to be determined in due course), we need
\begin{equation}\label{eq:fp-aa}
    L_1 \sim \frac{\log(2/\delta_1)}{\sqrt \lambda_1}
    \quad\text{where}\quad
    \delta_1 = \sqrt{1-|\braket{\omega}{\omega'}|^2}
    \quad\text{and}\quad
    \lambda_1 = | \braket{\omega}{s} |^2.
\end{equation}
The parameter $\delta_1$ is thus a guaranteed trace distance to the intermediate target state that we will reach; and $\lambda_1$ indicates the overlap of the target coefficient vector $\alpha$ with the initial state $\ket s$.

Then with \cref{eq:<w|s>,eq:alpha-g-tracedist} we have
\begin{align}
    \sqrt{\lambda_1} = \braket{\omega}{s} 
    &= \sqrt{\frac{\| \bitalpha \|_1}{N}} \sqrt{\frac{2^g}{2^g-1}} \ge \sqrt{\frac{\| \alpha \|_1 - 2^{-g}N}{N}}.  \label{eq:lambda1}
\end{align}
The two extremes here are the cases where $\alpha$ itself is already uniform---such that $\alpha_i = 1/\sqrt N$, which yields $L_1 = \BigO(N^{1/4})$; and a delta distribution---which yields $L_1 = \BigO(N^{1/2})$.
The reason why the uniform example does not already give $L_1 = \BigO(1)$ is that $\ket s$ does \emph{not} represent a uniform intermediate target state---in fact, $\ket s$ just is a uniform superposition of all possible bits, whereas a uniform superposition will likely have $\bitalpha_{ij} = 0$ for a series of higher-significance positions.
We improve upon this caveat in \cref{sec:bootstrapping}.

How do we move from the intermediate target state $\ket\omega$ from \cref{eq:omega} to the target state $\ket{\bitalpha}$?
We must project the second register onto the gradient state; either by postselection, or by another nested round of amplitude amplification.
We note that the fidelity of amplification in this step only increments the probability of obtaining the state $\ket{\bitalpha}$; if we succeed, we are guaranteed that we are left with that exact state, and there is no more introduced error.

If we perform amplitude amplification, we denote with $L_2,\delta_2$ and $\lambda_2$ the associated amplitude amplification parameters from \cref{eq:fp-aa}.
If we define the projector $\Pi \coloneqq \1 \otimes \ketbra g_G$, we can calculate
\begin{align}
    \lambda_2 = \bra\omega \Pi \ket\omega 
    &= \frac{1}{\| \bitalpha \|_1} \frac{1}{2^g-1} \sum_{i=0}^{N-1} \left(
    \sum_{j=1}^g 2^{(g-j)/2} 2^{-j/2} \bitalpha_{ij} \right)^2  \nonumber\\
    &= \frac{1}{\| \bitalpha \|_1} \frac{2^g}{2^g-1} \sum_{i=0}^{N-1} (\bitalpha_i)^2
    \ge \frac{\| \bitalpha \|_2^2}{\| \bitalpha \|_1}. \label{eq:lambda_2-eq1}
\end{align}
As $\bitalpha_i \le \alpha_i$ (since we round towards zero), we have that
\[
\frac{\| \bitalpha \|_2^2}{\| \bitalpha \|_1} 
\ge \frac{ \| \alpha - 2^{-g} \|_2^2 }{ \| \alpha \|_1 } = \frac{1}{\| \alpha \|_1} - 2\times 2^{-g} +  \frac{2^{-2g}N}{\| \alpha \|_1} \ge \frac{1}{\| \alpha \|_1} - 2\times 2^{-g}.
\]
%NOTE the above calculation is best performed coefficient-wise; and the notation is a bit misleading, one should first part out all \alpha_i that are zero. But this is only slightly abusing the notation.
With $\bra\omega \Pi \ket\omega \approx 1 / \| \alpha \|_1$, as before, we observe that the 1-norm of the coefficients $\alpha$ determines the overlap, with extreme cases being the uniform distribution for which $\| \alpha \|_1 = \sqrt N$, and delta distribution for which $\| \alpha \|_1 = 1$.

Jointly together these two overlaps yield the overall runtime: we have
\begin{align*}
    \lambda_1\lambda_2 &\ge \frac{\| \alpha \|_1 - 2^{-g}N}{N}\left( \frac{1}{\| \alpha \|_1} - 2\times 2^{-g} \right)
    = \frac1N - 2^{-g}\left( \frac{1}{\| \alpha \|_1} -2 \frac{\| \alpha \|_1}{N} - 2\times 2^{-g} \right) \\
    &\ge \frac1N - \frac{2^{-g}}{\| \alpha \|_1}.
\end{align*}
As the two amplitude amplification subroutines have to be called in a nested fashion, we have
\begin{align*}
 L = L_1 L_2 &= \frac{\log(2/\delta_1)\log(2/\delta_2)}{\sqrt{\lambda_1\lambda_2}} \le \log(2/\delta_1)\log(2/\delta_2) \bigg/\sqrt{ \frac1N - \frac{2^{-g}}{\| \alpha \|_1} } \\
 &\overset{*}{\le}
 \log(2/\delta_1)\log(2/\delta_2)\left( \sqrt N + \frac{2^{-g}N}{\| \alpha \|_1} \right),
\end{align*}
where the last inequality marked with $(*)$ holds for high enough precision $g$ such that $x = 2^{-g}N / \| \alpha \|_1 < 1/2$, and using $(1-x)^{-1} \le 1 + x$ for $x \le 1/2$.\footnote{
We note that this bound can be made tighter by a constant factor up to two, by expanding $(1-x)^{-1} \le 1+yx$ for some $y\in(1/2, 1]$ and correspondingly demanding $x < (y + \sqrt{y(4+y)} - 2)/(2y)$.
}
Under this assumption, we have
\begin{equation}\label{eq:scaling}
 L = L_1L_2 \le \log(2/\delta_1)\log(2/\delta_2) \left(\sqrt N + \frac12\right).
\end{equation}

\subsection{Precision and Error Analysis}\label{sec:error}

\begin{table}[t]
	\hspace{2mm}\begin{tabular}{rrllllll}
		\toprule
		& bit precision $g$ & 2 & 4 & 8 & 16 & 32 ($>$float) & 64 ($>$double) \\
		&$\epsilon=2^{-g}\le$ & $\displaystyle\frac14$ & $\displaystyle\frac{1}{16}$ & $0.004$ & $1.6 \times 10^{-5}$ & $ 2.3 \times 10^{-10}$ & $5.5 \times 10^{-20}$ \\
		\midrule
		\multirow{4}{*}{Toffolis} & \cite{Sanders2019} variant 1 & 4 & 8 & 16 & 32 & 64 & 128 \\
		& \cite{Sanders2019} variant 2 & 6 & 14 & 30 & 62 & 126 & 254 \\
		& us, variant 1 & \multicolumn{6}{l}{none} \\
		& us, variant 2 & 3 & 12 & 33 & 78 & 171 & 360 \\[5mm]
		\multirow{2}{*}{$\sqrt{\mathrm{SWAP}}$} & \cite{Sanders2019} & \multicolumn{6}{l}{none} \\
		& us & 2 & 4 & 8 & 16 & 32 & 64 \\
		\midrule
		\multirow{4}{*}{ancillas} & \cite{Sanders2019} variant 1 & 5 & 9 & 17 & 33 & 65 & 129 \\
		& \cite{Sanders2019} variant 2 & 4 & 6 & 10 & 18 & 34 & 66 \\
		& us, variant 1 & 1 & 2 & 3 & 4 & 5 & 6 \\
		& us, variant 2 & 3 & 6 & 9 & 20 & 37 & 70 \\
		\bottomrule
	\end{tabular}
	\caption{Reachable precision for a given gradient state $\ket g_G$ on $q$ qubits, and non-Clifford gate counts for each amplitude amplification iteration.
	\cite{Sanders2019}'s variant 1 and 2 refer to the two options for implementing their comparator circuits ($2g+1$ ancillas, $2g$ Toffolis; resp.~$g+2$ ancillas, $4g-2$ Toffolis). Our two variants refer to utilizing $\op U_\omega$ or $\op U_\mathrm{amp}$ as oracle, as discussed in \cref{sec:comparator} ($\log_2 g$ ancillas, no Toffolis; resp.~$g+\log_2g$ ancillas, $\le2g\log_2 g$ Toffolis; and $g$ $\sqrt{\mathrm{SWAP}}$s in either case for the gradient state). 
	Our SWAP network for variant 2 is built as in \cref{fig:permutation}, and trivial gate optimizations are applied.
	As \cite{Sanders2019} use $\op U_\mathrm{amp}$, only our variant 2 should be directly compared to \cite{Sanders2019}. We refer the reader to \cref{sec:bootstrapping,tab:speedups} for a speedup with regards to amplitude amplification rounds necessary.
	Device topology is not taken into account, where for planar architectures it is often the case that more ancillas mean more reshuffling of information, with correspondingly higher circuit depth.}
	\label{tab:precision}
\end{table}

For runtime estimates we already kept track of lower bounds as given in \cref{eq:scaling}.
Furthermore, we can independently tune the success probability $\delta_2$ which determines with what likelihood we obtain a copy of the state $\ket{\bitalpha}$ (or, more precisely, the state $\ket{A'}$ that we obtain when starting from an approximate intermediate target state $\ket{\omega'}$ instead of $\ket\omega$ in the previous amplification round), and depending on what failure rate we find acceptable; in other words, $\delta_2$ determines with what likelihood we obtain a state $\ket\psi$, $\delta_1$ controls how close $\ket\psi$ to $\ket{\bitalpha}$ a state we obtain, and the parameter $g$ controls the precision of the coefficients (i.e.\ how close $\ket{\bitalpha}$ is to $\ket\alpha$).

As our overall procedure approximates $\ket\alpha$ by $\ket{\bitalpha}$, which by \cref{eq:alpha-g-tracedist} differ in trace distance by $\le 2^{(1-g)/2} \sqrt{\| \alpha \|_1}$, we should choose $\delta_1$ to represent a trace distance of about the same amount (resulting in $\sqrt2$ times this distance overall).
Then \cref{eq:scaling} reads
\[
    L \le \log\left(\frac{2}{\delta_2}\right)\times\frac12\big( 1 + g - \log \| \alpha \|_1 \big) \times \left( \sqrt N + \frac12 \right).
\]
This is in alignment with \cite{,Grover2000,Sanders2019}, who quote an approximate runtime bound for their ``linear coefficients'' problem of $\BigO(g \sqrt N)$, where the authors omitted the logarithmic error terms due to $\delta_1$ and $\delta_2$.
One can verify that keeping track of amplitude amplification errors throughout their procedure yields equivalent factors.

As the gradient state can store $g$ values within $\lceil \log_2 g \rceil$ many qubits, we can load high-precision numbers with little overhead, as shown in \cref{tab:precision}.

\subsection{Computational Primitives and Resource Requirements}\label{sec:resources}
There are two fundamental building blocks that we need to analyse: implementing an amplitude gradient gate for states like \cref{eq:gradient-state}, and the phase kickback oracle unitary $\op U_\omega$ from \cref{eq:Uw} in case we only have access to an oracle as in \cite{Sanders2019,Grover2002}.

\subsubsection{Oracle Comparison between $\op U_\omega$ and $\op U_\mathrm{amp}$}\label{sec:Uw}
\citeauthor{Grover2002,Sanders2019} analyse the complexity of their black box state preparation algorithms by assuming there exists an oracle unitary $\op U_\mathrm{amp}$ that can calculate the desired amplitudes digitally, as
\begin{equation}\label{eq:Uamp}
    \op U_\mathrm{amp} \ket i \ket z = \ket i \ket{z \oplus \bitalpha_i}
\end{equation}
for any bit string $z \in \{0,1\}^g$ and computational basis state $\ket i$.
How this oracle unitary is implemented is left unspecified; in a sense this notion completely decouples the aspect of \emph{computing} the coefficients from \emph{loading} these coefficients as actual amplitudes.
We will keep with this notion, as the oracle's own resource requirements are highly dependent on the amplitudes to be loaded. For similar reasons, a direct comparison of the resource requirements between $\op U_\omega$ and $\op U_\mathrm{amp}$ is futile.

In addition, comparing two black box state loading algorithms that depend on different oracles is only fair if neither of the oracles yields an unfair advantage over the other. To rule this out, we will show in the following discussion that we can phrase a variant of our black box state loading procedure that uses $\op U_\mathrm{amp}$ as an oracle instead of $\op U_\omega$, which will put both algorithms on a level playing field.
This will induce a minor overhead in terms of ancillas, but crucially results in the same query complexity, as exactly one application of $\op U_\mathrm{amp}$ will be required to replace one application of $\op U_\omega$.

In conjunction with the oracle, \citeauthor{Sanders2019} employ a non-Clifford step that requires a comparator circuit mediated by a conditional flip of a flag qubit
\begin{equation}\label{eq:comparator}
    \ket{A}\ket{B}\ket{0} \longmapsto \ket{A}\ket{B}\begin{cases}
    \ket 1 & A \ge B \\ \ket 0 & A < B
    \end{cases}
\end{equation}
which is used to transduce the amplitudes into an ancillary system (cf.~\cite[eq.~6]{Sanders2019}).

In contrast, the algorithm we introduced in this paper assumes the existence of an oracle $\op U_\omega$ (\cref{eq:Uw}), which induces a conditional phase flip $\op U_\omega\ket i\ket j = (-1)^{\bitalpha_{ij}}\ket i \ket j$.
When applied to a gradient state in place of register $\ket j$, one application of $\op U_\omega$ thus achieves that a total weight of coefficients proportional to $\bitalpha_i$ has a flipped coefficient, which is equivalent to the combined step in \citeauthor{Sanders2019}. Again, it is important to emphasize that this does not mean the oracles are equivalent, or that query complexity for the two routines should be comparable a priori.

However, in \cref{sec:comparator} we will demonstrate directly that with the aid of a comparator circuit one can utilize $\op U_\mathrm{amp}$ to implement $\op U_\omega$; and thus obtain a variant of our proposed algorithm based on $\op U_\mathrm{amp}$ as fundamental oracle primitive.
As this renders the algorithms more comparable, we provide a resource calculation for either choice of oracle in \cref{tab:speedups}.

\subsubsection{Comparator Circuits, and Replacing $\op U_\omega$ with $\op U_\mathrm{amp}$}\label{sec:comparator}
In \citeauthor{Sanders2019}'s comparator circuit, comparing two $g$ bit numbers as in \cref{eq:comparator} can be done using $g$ Toffoli gates \cite{Cuccaro2004,Gidney2018}.
As the method for this comparison relies on binary addition (or rather subtraction) which in this case is performed in-place, the register $\ket{\bitalpha_i}$ first has to be copied to a temporary slot (which is always possible as it is a computational basis state).
The comparator itself thus requires $g+1$ ancillas, and $2g$ Toffoli gates (as the comparison subroutine has to be uncomputed as well).

In our case, if we amend our algorithm to be based on $\op U_\mathrm{amp}$ instead of $\op U_\omega$, we first apply
\[
	\op U_\mathrm{amp}\frac{1}{\sqrt N}\sum_{i=0}^{N-1}\ket i\ket 0\ket g_G = \frac{1}{\sqrt N}\frac{1}{\sqrt{2^g-1}}\sum_{i=0}^{N-1}\ket i\ket{A_i}\sum_{j=0}^{g-1}2^{(g-j-1)/2}\ket j.
\]
As the register $\ket j$ stems not from uniformly-distributed bins, but from a binary tree partition (i.e.\ from the gradient state $\ket g_G$) the next step is to introduce a phase flip on $\ket j$ if the $j$\textsuperscript{th} bit of $\bitalpha_i$ is $1$, i.e.\ when $\bitalpha_{ij} = 1$.

As aforementioned, with the use of $g$ ancillas it is straightforward to translate $\ket j$ into a unary one-hot mask and then perform such a bit-by-bit comparison with $2g$ Toffoli gates, which matches the bound given in \cite{Sanders2019}.

If we want to be qubit-conservative (and assuming $\ket j$ remains stored in binary), one can alternatively decompose a $\lceil \log_2 g \rceil + 1$-control Toffoli gate into primitive resources (to e.g.\ check the address register $\ket j =\ket 5$, and that $\bitalpha_{i5}=1$, and conditionally flip a single ancilla). This can be done with $32\lceil \log_2 g \rceil - 96$ $\op T$ or $\op T^\dagger$ gates \cite{He2017}.
This method results in roughly $8 \log_2 g$ times the Toffoli gate count as in \cite{Sanders2019}.

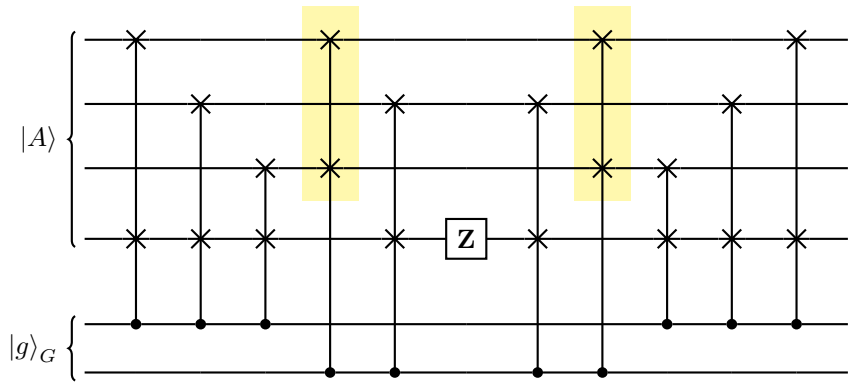
\begin{figure}
	\centering
	\begin{quantikz}
		\lstick[wires=4]{$\ket{\bitalpha}$}\qw & \swap{4} & \qw & \qw & \swap{5}\gategroup[3,steps=1,style={draw=none,fill=yellow!40, inner xsep=2pt},background]{{}}  & \qw & \qw & \qw & \swap{5}\gategroup[3,steps=1,style={draw=none,fill=yellow!40, inner xsep=2pt},background]{{}} & \qw & \qw & \swap{4} & \qw \\
		\qw & \qw & \swap{3} & \qw & \qw & \swap{4} & \qw & \swap{4} & \qw & \qw & \swap{3} & \qw & \qw \\
		\qw & \qw & \qw & \swap{2} & \targX{} & \qw & \qw & \qw & \targX{} & \swap{2} & \qw& \qw& \qw\\
		\qw & \targX{} & \targX{}& \targX{} & \qw &\targX{} & \gate{\op Z}&\targX{}& \qw & \targX{} & \targX{}& \targX{} & \qw\\[3mm]
		\lstick[wires=2]{$\ket{g}_G$}\qw & \control{} & \control{}& \control{} & \qw & \qw & \qw & \qw & \qw & \control{} & \control{}& \control{} & \qw\\
		\qw & \qw & \qw & \qw & \control{} & \control{} & \qw & \control{} & \control{} & \qw & \qw& \qw & \qw
	\end{quantikz}
	\caption{Permutation network for shuffling the $x$\textsuperscript{th} qubit of the register $\ket{\bitalpha}$ to its least significant position, applying a phase flip there, and shuffling the qubits back to their original position.
		Shown is the example $g=2$, which yields $4$ bits of precision for the data to be loaded, as shown in \cref{tab:precision}.
		The network is straightforward to generalize for larger $g$, and requires at most $2g \lceil \log_2 g \rceil$ controlled SWAP operations; in many cases, further optimizations can be applied (in this case the two shaded swaps are redundant because they are outside of the causality cone of the $\op Z$-gate, cf.~\cref{tab:precision}).
		A controlled SWAP operation---or Fredkin gate---can be implemented using a single Toffoli and two CNOTs \cite{Smolin1996}.}
	\label{fig:permutation}
\end{figure}

A better approach is to use a custom permutation network as shown in \cref{fig:permutation}, which shuffles the $j$\textsuperscript{th} bit to the bottom position, where a $\op Z$ operation can be performed (in essence using the register where $\ket A$ is written as a source for phase kickback).
This results in the use of $2 g \log_2 g$ Toffoli gates, i.e.\ an overhead of $\log_2 g$ as compared to \cite{Sanders2019}; the resulting SWAP networks can often be optimized further, by considering the light cone of the phase flip $\op Z$ that is applied to the $j$\textsuperscript{th} bit.
We summarise our resulting Toffoli counts in \cref{tab:precision}.

As a final observation we note that if the oracle $\op U_\omega$ is implemented such that one can query individual bits of the vector $\bitalpha$ to be loaded, no comparison is necessary at all; while this looks like sweeping unwanted complexity of the algorithm under the rug (it is certainly reasonable to assume that in order to calculate the $k$\textsuperscript{th} binary digit of a number one has to also compute all previous $k-1$ bits), it is educational to discuss this variant of the query model, as it demonstrates that the choice of oracle (even if they serve a similar purpose) dictates the query cost of an algorithm. In the same light, if we were to utilize $\op U_\omega$ as a primitive in \cite{Sanders2019}'s work, we would have to query $\op U_\omega$ at least $g$ times in order to obtain a single amplitude written out in binary, what $\op U_\mathrm{amp}$ is capable of achieving in one step.

\subsubsection{Amplitude Gradient State}\label{sec:g-state}
Let us finally turn our attention to the gate complexity of preparing an amplitude gradient state $\ket g_G$ as in \cref{eq:gradient-state}.
Assuming for now that the state space is that of a qudit $\field C^g$, and with access to arbitrary rotations around $\op X$---i.e.\ gates of the form $\op U_{ij}(\theta) = \exp(\mathrm i \theta \op X_{ij})$ for $\theta \in [0, 2\pi)$ and where $\op X_{ij}$ generates rotations in the subspace spanned by $\ket i$ and $\ket j$, we can iteratively apply
\begin{align*}
    \ket0 &\xmapsto{\op U_{01}(\theta_1)} \cos\theta_1 \ket0 + \sin\theta_1 \ket 1 \\
    &\xmapsto{\op U_{12}(\theta_2)} \cos\theta_1 \ket0 + \sin\theta_1 (\cos\theta_2 \ket 1 + \sin\theta_2\ket 2) \\
    &\quad\ \ \ \vdots
\end{align*}
with a suitable sequence of angles $\theta_i$.
Yet implementing arbitrary single qubit rotations again requires non-Clifford gates, and compiling qudit operations down to the $\lceil \log_2 g \rceil$-sized qudit register that $\ket g_G$ resides on likely requires even more such resources.

\begin{figure}[t]
    \centering
    \raisebox{9mm}{\begin{quantikz}[row sep={1cm,between origins},column sep=3mm]
        \qw & \ctrl1 & \targ{} & \qw & \qw & \qw \\
        \qw & \gate{\sqrt{\op X}} & \ctrl{-1} & \ctrl1 & \targ{} & \qw \\
        \qw & \qw & \qw & \gate{\sqrt{\op X}} & \ctrl{-1} & \qw 
    \end{quantikz}}
    \hfill
    \begin{quantikz}[row sep={1cm,between origins},column sep=3mm]
        \qw & \ctrl1 & \qw \\
        \qw & \gate{\sqrt{\op X}} & \qw \\
    \end{quantikz}
	\hspace{2mm}
    \raisebox{3mm}{=}
    \raisebox{1mm}{\begin{quantikz}[row sep={1cm,between origins},column sep=3mm]
        \qw & \qw & \gate{\op T} & \ctrl1 & \qw & \ctrl1 & \qw & \qw \\
        \qw & \gate{\op H} & \gate{\op T} & \targ{} & \gate{\op T^\dagger} & \targ{} & \gate{\op H} & \qw \\
    \end{quantikz}}
    \caption{Left: $\sqrt{\mathrm{SWAP}}$ chain used to create a $\ket g_G$ state, as described in \cref{sec:g-state}. Right: implementation of $\sqrt{\mathrm{CNOT}}$ using 3 $\op T$-gates.}
    \label{fig:g-prep}
\end{figure}

An easier way of creating an amplitude gradient state for $g=2^i-1$ is if we allow ourselves to sacrifice a single dimension as slack space, create a state
\begin{equation}\label{eq:slack-gradient-state}
    \frac{1}{2^g} \left( 2^{(g-1)/2} \ket 0 + 2^{(g-2)/2} \ket 1 + \ldots + \sqrt2 \ket{g-1} + \ket{g} + \ket{g+1} \right) 
\end{equation}
and ignore the slack dimension $\ket{g+1}$ in all subsequent operations; to see that this does not introduce an additional error, note that the entire procedure simply acts as if the amplitudes $\bitalpha_i$ are really given to $g+1$ bits of precision---where the last bit has twice the weight as the second-to-last---but it just so happens that all $\bitalpha_{i,g+1}=0$.

A state like \cref{eq:slack-gradient-state} is readily prepared if we have access to a $g+1$-sized ancilla register; note that we do indeed have access to a $g$-sized space already, since whenever we apply the gate $\op G$, the register storing the amplitudes $\bitalpha_i$ is reset to $\ket0^{\otimes g}$.
In this case, with the aid of successive $\sqrt{\mathrm{CNOT}_{12}}$-gates and $\mathrm{CNOT}_{21}$, and e.g.\ for $g=4$, we perform the operation
\begin{align*}
    \ket{1000}
    &\longmapsto \frac{1}{\sqrt8} \left( \sqrt4 \ket{1000} + \sqrt4 \ket{0100} \right) \\
    &\longmapsto \frac{1}{\sqrt8} \left( \sqrt4 \ket{1000} + \sqrt2 \ket{0100} + \sqrt2 \ket{0010} \right) \\
    &\longmapsto \frac{1}{\sqrt8} \left( \sqrt4 \ket{1000} + \sqrt2 \ket{0100} + \ket{0010} + \ket{0001} \right).
\end{align*}
We then convert the unary representation into a binary representation in the $\ket g_G$-state, e.g.\ $\ket{0100}\ket{00}_G \mapsto \ket{0100}\ket{10}_G$, which is straightforward with a sequence of hard-coded CNOT gates.
To finally unentangle the ancilla register, we apply half of the permutation network in \cref{fig:permutation} which results in a state $\ket{0001}\ket g_G$, and reset the remaining $\ket1$ to $\ket0$.

The Toffoli cost of this amplitude gradient state preparation protocol is thus equal to half the Toffoli cost of a $g+1$-bit permutation network; and $g$ $\sqrt{\mathrm{CNOT}_{12}}=\op H_2 \sqrt{\op{cZ}_{12}} \op H_2 = \op H_2 \op{cS}_{12} \op H_2$, where $\op{cS}$ is a controlled-$\op S$ gate that can be implemented as $\op{cS}_{12} = (\op T_1 \otimes \op T_2) \mathrm{CNOT}_{12} \op T^\dagger_2 \mathrm{CNOT}_{12}$.
The procedure is shown in \cref{fig:g-prep}.

We also note that existing near-term quantum devices sometimes feature a variant of $\mathrm{SWAP}^\theta$ gates natively, e.g.\ through the $\mathrm{fSim}$ gate \cite[Eq.~53]{Google2019}, and the above sequence of successively moving the single $\ket1$-ancilla to the right with a $1/\sqrt2$ factor can simply be implemented directly using $g$ $\sqrt{\mathrm{SWAP}}$ operations.

\section{Optimized Bootstrapping}\label{sec:bootstrapping}
The amplitude amplification overhead from \cref{eq:fp-aa} and the runtime bound in \cref{eq:scaling} indicate that even if our aim was to load a uniform distribution, there would be a $L_1L_2 \approx \sqrt N$ cost; this is not an artefact of our construction, but an overhead already present in \citeauthor{Grover2000}'s original black-box state loading paper.
But we already start with a ``uniform'' initial state, $\ket s$---so where does this overhead originate?

The point to look at is the gradient state $\ket g$ from \cref{eq:gradient-state}, which doesn't serve as a good initial state for all possible amplitude vectors to be loaded.
In fact, it weighs every bit as equally likely to occur in the final state to be loaded.
What we should choose instead is a gradient state representing the most likely bit configuration to occur, weighted by the significance of the bit.
For instance if $N=64$ and $\alpha_i = 1/8$ for all $i$, then $\bitalpha_i = \texttt{0.001}$, and the best initial vector thus consists of a single ``on'' bit, namely the $1/8$\textsuperscript{th} position $\ket{\alpha,0} = 0\ket0 + 0\ket1 + \ket2$.
In this example no amplitude amplification round is necessary at all, since $\ket s$ is already a uniform superposition.

In brief, what we need to choose is an initial state $\ket{\beta} = \sum_{j=0}^{g-1} \beta_j \ket j$ such that the overlap with the intermediate target state is maximized, i.e.
\[
    \sum_{i=0}^{N-1} \bra{\beta} \sum_{j=0}^{g-1} 2^{-j/2} \bitalpha_{ij} \ket j = \sum_{j=1}^g \beta_j 2^{-j/2} \sum_{i=0}^{N-1} \bitalpha_{ij} \eqqcolon \sum_{j=1}^g \beta_j \bar\bitalpha_j.
\]
This is the case if $\ket\beta$ is chosen parallel to the vector defined by the coefficients $\bar\bitalpha_j$ labelling the average $j$\textsuperscript{th} bit weight.
We thus set
\begin{equation}\label{eq:new-initial-state}
    \ket{\bar\bitalpha} \coloneqq \frac{1}{\| \bar\bitalpha \|_2} \sum_{j=0}^{g-1} \bar\bitalpha_j \ket j.
\end{equation}
As an example, consider the case where we aim to load a quantum state with eight coefficients to three bits of precisison, and these coefficients are
\[
\mathtt{0.100},\ \ \mathtt{0.100},\ \  \mathtt{0.100},\ \ \mathtt{0.100},\ \  \mathtt{0.111},\ \ \mathtt{0.101},\ \  \mathtt{0.110},\ \ \mathtt{0.110}.
\]
Then the first bit position is "on" in $8/8$ cases, the second in $3/8$ cases, and the third in $2/8$ cases; thus $\bar A_0 = 1, \bar A_1 = 0.375$, and $\bar A_2=0.25$. The normalization in \cref{eq:new-initial-state} then renders this a valid quantum state.

With this new initial state $\ket{s'} = N^{-1/2} \sum_{i=0}^{N-1} \ket i \ket{\bar\bitalpha}$ instead of \cref{eq:initial-state}, we obtain a shorter $L_1'$ time to amplify towards the intermediate target state $\ket\omega$, namely due to
\begin{align}
    \sqrt{\lambda_1'} = \braket{\omega}{s'} &= \frac{1}{\sqrt N} \frac{1}{\sqrt{\| \bitalpha \|_1}} \frac{1}{\| \bar\bitalpha \|_2} \sum_{i=0}^{N-1} \sum_{j=0}^{g-1} 2^{-(j+1)/2} \bitalpha_{ij} \bar\bitalpha_j  \nonumber\\
%    &= \frac{1}{\sqrt N} \frac{1}{\sqrt{\| \bitalpha \|_1}} \frac{1}{\| \bar\bitalpha \|_2}
%    \sum_{j=0}^{g-1} \left( \sum_{i=0}^{N-1} 2^{-(j+1)/2} \bitalpha_{ij} \right)\left( \sum_{k=0}^{N-1} 2^{-(j-1)/2} \bitalpha_{kj} \right)  \nonumber\\
    &= \frac{1}{\sqrt N} \frac{1}{\sqrt{\| \bitalpha \|_1}} \frac{1}{\| \bar\bitalpha \|_2} \underbrace{\sum_{j=0}^{g-1} \bar\bitalpha_j \bar\bitalpha_j}_{\equiv \| \bar A\|_2^2}  \nonumber\\[-6mm]
    &= \frac{1}{\sqrt N} \frac{\| \bar\bitalpha \|_2}{\sqrt{\| \bitalpha \|_1}}.
    \label{eq:lambda1'}
\end{align}
In conjunction with the unaltered second state loading stage described in \cref{sec:algorithm,eq:lambda_2-eq1}, we have
\begin{align*}
    \lambda_1' \lambda_2 &\ge \frac{1}{N} \frac{\| \bar\bitalpha \|_2^2}{\| \bitalpha \|_1} \frac{ \| \bitalpha \|_2^2 }{ \| \bitalpha \|_1}
    \ge \frac{\| \bar\bitalpha \|_2^2}{N} \frac{ \| \alpha-2^{-g} \|_2^2 }{ \|\alpha\|_1^2 } \\
    %&\ge \frac{\| \bar\bitalpha \|_2^2}{N} \frac{ 1 - 2 \times 2^{-g}\|\alpha\|_1 }{ \|\alpha\|_1^2 }
    % just drop \| \alpha \|_1 < 1
    &\ge \frac{\| \bar\bitalpha \|_2^2}{N} \frac{ 1 - 2 \times 2^{-g} \| \alpha \|_1 }{ \|\alpha\|_1^2 }
    = \frac{1-2^{1-g}\| \alpha \|_1}{N} \left(\frac{\| \bar\bitalpha \|_2}{ \|\alpha\|_1 }\right)^{\!\!2}.
\end{align*}
We thus obtain an optimized state loading protocol with runtime
\begin{equation}\label{eq:scaling-2}
    L' = L_1' L_2 \le \log(2/\delta_1)\log(2/\delta_2)(1+2^{1-g}\| \alpha \|_1) \times \sqrt N \frac{\|\alpha\|_1}{ \| \bar\bitalpha \|_2 }.
\end{equation}

For a discretized set of amplidutes like $\bitalpha$, \cref{eq:lambda1'} gives the exact expression for $\sqrt{\lambda_1'}$.
Nevertheless, the quantity $\| \bar \bitalpha \|_2$---i.e.\ the $2$-norm of the root of the average bit weight---appears hard to get a handle on, in general.
Yet we emphasise that in the black-box state loading picture where we already assume the existence of the oracle unitary $\op U_\mathrm{amp}$ as in \cref{eq:Uamp} anyways (or $\op U_\omega$ as in \cref{eq:Uw}), with the aid of which estimating $\bar\bitalpha_j$---i.e.\ the average value of the $j$\textsuperscript{th} bit---is straightforwardly done by executing the oracle a few times and performing single qubit measurements.
Hence for our purposes we assume that the coefficients $\bar A_j$ and thus also $\| \bar \bitalpha \|_2$ are empirically known to sufficient accuracy; we analytically calculate or estimate the scaling of this quantity for various distributions in \cref{sec:explicit}.

Yet even if we know the coefficients $\bar A_j$ to some precision by an initial sampling procedure, how difficult is it to prepare this new initial state? We first note that the number of \emph{coefficients} of $\ket{\bar A}$ equals the number of \emph{bits of precision} of $\ket A$, i.e.\ $g$; as such, if e.g.\ $g = \BigO(\log N)$, the cost of preparing $\ket{\bar A}$ is negligible (i.e.\ an at most polylogarithmic overhead in $N$) in comparison to the speedup of using this optimized bootstrapping routine, which as per \cref{tab:speedups} can be polynomial or even exponential in $N$.
In turn, the precision requirements on the coefficients $\bar A_j$ are similarly mild: in the step prior to \cref{eq:new-initial-state}, an approximate average state $\ket{\bar B}$ would introduce a relative error in our runtime estimate $L'$ of magnitude
\[
	\frac{L'_{\bar B}}{L'_{\bar A}} = \frac{\| \bar A \|_2\| \bar B \|_2}{\braket{\bar A}{\bar B}}
\]
which shows that knowledge of the coefficients $\bar A_j$ to $g$ bits of precision is sufficient for a slowdown of at most $\BigO(1)$.

We summarise the obtainable speedups when using optimised initial states for a series of amplitude vectors in \cref{tab:speedups}.
\begin{table}[t]
    \hspace*{-1mm}\begin{minipage}{17cm}
    \begin{tabular}{rl | rrr | rrr l }
        \cmidrule[\heavyrulewidth]{1-8}
        & & \multicolumn{3}{c|}{$L$ for $N=\ldots$} & \multicolumn{3}{c}{$L'$ for $N=\ldots$} & \\
        distribution & $L'=\BigO(\cdot)$ & $10^2$ & $10^4$ & $10^6$ & $10^2$ & $10^4$ & $10^6$ & \\
        \cmidrule{1-8}
        delta & $\sqrt N$ & 10 & $100$ & $10^3$  & 10 & $100$ & $10^3$  & \\
        uniform & $1$ & $\vdots\ $ & $\vdots\ \ $ & $\vdots\ \ \,$  &1 &1 &1  & \\
        triangle $\alpha_i\propto i$ & $1$ & $\vdots\ $ & $\vdots\ \ $ & $\vdots\ \ \,$  &1 &1 &1  & \\[3mm]
        powerlaw$^\dagger$ $\alpha_r\propto r^{-k}$ (\Cref{fig:powerlaw})  & $N^{\delta(k)}$ with $\delta(k) < 1/2$ & $\vdots\ $ & $\vdots\ \ $ & $\vdots\ \ \,$  & $\displaystyle\left\{\begin{tabular}{r}
        2 \\
        3 \\
        4
        \end{tabular}\right.\hspace*{-2.5mm}$ & $\displaystyle\begin{tabular}{r}
        4 \\
        9 \\
        25
        \end{tabular}\hspace*{-1.8mm}$ & 
        $\displaystyle\begin{tabular}{r}
        9 \\
        27 \\
        205
        \end{tabular}\hspace*{-1.8mm}$ & 
        $\displaystyle\begin{tabular}{rl}
        \small $k$&\!\!\!\!\!\!\small=\,1/2 \\
        \small $k$&\!\!\!\!\!\!\small=\,3/2 \\
        \small $k$&\!\!\!\!\!\!\small=\,2
        \end{tabular}$ \\[9mm]
        normal$^\dagger$ $\mathcal N(0, \sigma)$ (\Cref{fig:normal}) & $\displaystyle\begin{cases}
        1 & N\lessapprox\sigma \\
        \sqrt{N/\sigma} & \text{otherwise}
        \end{cases}$  & $\vdots\ $ & $\vdots\ \ $ & $\vdots\ \ \,$  &
        $\displaystyle\left\{\begin{tabular}{r}
        1 \\
        1 \\
        5
        \end{tabular}\right.\hspace*{-2.5mm}$ & $\displaystyle\begin{tabular}{r}
        1 \\
        3 \\
        44
        \end{tabular}\hspace*{-1.8mm}$ & 
        $\displaystyle\begin{tabular}{r}
        1 \\
        22 \\
        440
        \end{tabular}\hspace*{-1.8mm}$ & 
        $\displaystyle\begin{tabular}{rl}
        \small\!$\sigma$&\!\!\!\!\!\!\small=\,$10^4$ \\
        \small\!$\sigma$&\!\!\!\!\!\!\small=\,$100$ \\
        \small\!$\sigma$&\!\!\!\!\!\!\small=\,1
        \end{tabular}$\\[9mm]
        random$^\dagger$ $\alpha_i \sim \operatorname{unif}(0,1)$ (\Cref{fig:random}) & 1 & $\vdots\ $ & $\vdots\ \ $ & $\vdots\ \ \,$  & 1&1&1 \\
        \cmidrule[\heavyrulewidth]{1-8}
    \end{tabular}
    \end{minipage}
    \caption{Speedups due to optimized initial state for state loading, as compared to the generic $\sqrt N$ runtime. Explicit calculations are given in \cref{sec:explicit}; for those marked with $^\dagger$, runtimes are conjectured based on empirical analysis and unlikely to be tight.}
    \label{tab:speedups}
\end{table}
Particularly noteworthy is that states not even \emph{that} close to uniform---such as for a triangular set of coefficients where $\alpha_i \propto i$ with a trivial oracle unitary $\op U_\mathrm{amp} = \1$)---can still be simple to load; they require $\BigO(1)$ amplitude amplification rounds .
Furthermore, if the coefficients themselves are unknown (e.g.\ for a random oracle) but at least their expected distribution is known, similar statements of the runtime \emph{in expectation} can be found.
For instance for coefficients sampled uniformly at random from the unit interval (modulo normalisation) we empirically find a $\BigO(1)$ scaling for state loading as well.

We note that the number of amplitude amplification rounds given by the $L'$-times from \cref{tab:speedups} for the powerlaw distribution appears to match up with the ``search with advice'' protocol by \citeauthor{Montanaro2018} \cite[Prop.~3.4]{Montanaro2018}, in the sense that if we used our state preparation protocol to prepare an advice state, then the ``search with advice'' runtime $T$ empirically satisfies $T\times L' = \BigO(\sqrt N)$.

\section{Generalised State Loading}\label{sec:general}
As already mentioned in the introduction, we note that our black-box state loading technique really loads a state
\[
    \ket{\psi} = \sum_{i=0}^{N-1} \left( \sum_{j=1}^g b_{ij} w_{\!j} \right) \ket i
\]
for some boolean matrix $\op B = ( b_{ij} )$ and weight vector $ \op w = (w_{\!j})$, where so far we simply assumed that $w_{\!j} = 2^{-j}$---which is ideal for loading a binary representation of the coefficients $\alpha_i \equiv \sum_j b_{ij} w_{\!j} = \sum_j \bitalpha_{ij} 2^{-j}$.
Yet we can consider the more general case where the $w_{\!j}$ are arbitrary weights, e.g.\ distances in a weighted graph for producing a state to sample from a Travelling Salesman instance, or any other task where fixed scores $w_{\!j}$ are assigned (or not assigned) to individual items $i$.

The state loading procedure is the same as in \cref{sec:algorithm} but for a change in the amplitude gradient state created with a unitary $\op G$, which we replace with a unitary
\[
    \op W \ket 0  =  \sum_{j=0}^{g-1} \sqrt{w_{\!j}} \ket j.
\]
Naturally, optimised bootstrapping as explained in \cref{sec:bootstrapping} can be applied in this context as well.

\section{Conclusion}

We derive runtime bounds for this optimised state loading protocol, and evaluate them analytically---or empirically, where an analytic evaluation is difficult---for a set of widely-used distributions. We find significant speedups as compared to agnostic black box state loading: e.g.\ if the amplitudes follow a powerlaw distribution $\propto r^{-k}$ over $10^6$ elements, agnostic black box state loading would require $\sim 10^3$ amplification rounds, irrespective of the powerlaw's falloff exponent $k$. For $k=2$---where most of the probability mass is concentrated on a few elements---we still only require $\sim 440$ amplification rounds; if the exponent was $1/2$, it would be only $9$.

These speedups in terms of rounds of required amplitude amplification are possible even if we utilize the same oracle as \cite{Sanders2019,Grover2002}. Yet if we have access to an oracle that can query individual bits of the amplitudes to be loaded, obtaining amplitudes with $64$ bits of precision is possible with 6 ancillas; whereas \cite{Sanders2019}'s protocol would require 66 (cf.~\cref{tab:precision}). We again emphasize that this comparison between algorithms that query fundamentally different oracles is not meant to imply that the oracles themselves are comparable, it is a useful comparison in terms of the overall task to be achieved (loading amplitudes in a black box setting), and potentially interesting as a comparison between the usefulness of various oracle variants themselves.

While our technique is only described for non-negative amplitudes, we expect that they can be extended to negative and complex amplitudes (or those given in a different number representation, such as polar coordinates) in a similar fashion to \citeauthor{Sanders2019}, or by utilizing a phase gradient state in conjunction with an oracle can query magnitude and phase of the amplitudes to be loaded directly.

There are further optimisations one could think of.
If, for instance, the amplitudes are all from a fixed set of numbers $\alpha_i \in S$, then it is conceivable that a more efficient number representation than binary can be derived.
Similar to the generalised state loading described in \cref{sec:general}, such a representation would likely improve the runtime further, and require even fewer ancillas.
Finally, an interesting question to pursue would be in what concrete context the generalised black box state loading protocol can be employed.

\section*{Acknowledgements}
We would like to thank Sathya Subramanian, Felix Leditzky and Yuval Sanders for helpful discussions, as well as the reviewers at Quantum for very useful feedback. J.\,B.~is supported by the Draper's Research Fellowship at Pembroke College.

\printbibliography

\newpage
\appendix
\section{Optimized Initial State Calculations}\label{sec:explicit}
We present calculations for the speedups presented in \cref{tab:speedups}, focusing on the dependence of $L'$ given in \cref{eq:scaling-2} in $N$, i.e.
\[
    L' \sim \sqrt N \frac{\|\alpha\|_1}{ \| \bar\bitalpha \|_2 }.
\]
\paragraph{Delta Distribution.}
A single element is marked; hence $\| \alpha \|_1 = \sum_i \alpha_{i=0}^{N-1} = 1$. With an optimized state loading protocol we first need to determine the average bit weights $\bar\bitalpha_j$ as defined just before \cref{eq:new-initial-state}, i.e.
\[
    \bar\bitalpha_j = 2^{-j/2} \sum_{i=0}^{N-1} \bitalpha_{ij}.
\]
As $\bitalpha_{i0}$ denotes the most significant bit, i.e.\ the $1/2$'s, as laid out at the start of \cref{sec:algorithm}, we can assume that for the single marked element $i'$ we have $\bitalpha_{i'j}=1$ for all $j=1, \ldots, g$, and $\bitalpha_{ij}=0$ for all $i\neq i'$.
Then $\bar\bitalpha_j = 2^{-j/2}$, and thus
\[
1 - 2^{-g} \le \| \bar\bitalpha \|_2^2 = \sum_{j=1}^g 2^{-j} \le 1
\]
By \cref{eq:scaling-2} we get $L_1' = \BigO(\sqrt N)$.

\paragraph{Uniform Distribution.}
All elements are marked; hence $\alpha_i=1/\sqrt N$, and $\| \alpha \|_1 = \sum_i \alpha_i = N / \sqrt{N} = \sqrt N$.
For the discrete case we assume there exists one $k$ such that $2^{-k} = 1/\sqrt N$.
For this $k$ we then have $\bitalpha_{ik} = 1$ and $\bitalpha_{ij} = 0$ for all $j\neq k$, and for all $i$. Thus
$\bar\bitalpha_k = 2^{-k/2} \times N = N^{3/4}$ and $\bar\bitalpha_j = 0$ for all $j\neq k$.
Hence $\| \bar\bitalpha \|_2 = N^{3/4}$.
By \cref{eq:scaling-2} we have $L' = \BigO(N^{1/4})$.

This is not yet ideal; we want to obtain a $\BigO(1)$ scaling.
By choosing a different number representation in which we obmit all higher-order bits $j<k$, we obtain $\bar\bitalpha_1 = N / \sqrt 2$ and $\bar\bitalpha_j = 0$ for $j\neq k$; thus $\| \bar\bitalpha \|_2 \sim N$, and the constant runtime follows.

It is worth noting that this particular shortcut is a variant of using the more generalised state loading protocol from \cref{sec:general}, with the weights corresponding to an optimized choice of the distribution's bit representation.

\paragraph{Triangle Distribution.}
We have $\alpha_i \propto i$ for $i=0,\ldots,N-1$, which with normalisation under the 2-norm yields
\[
    \alpha_i = i / \sqrt{N(N-1)(2N-1)/6}.
\]
This means
\[
    \| \alpha \|_1 = \sqrt{\frac32} \frac{N(N-1)}{\sqrt{N(N-1)(2N-1)}} \sim \sqrt{\frac34} \frac{N^2}{N^{3/2}} = \sqrt{\frac34}\sqrt N.
\]
We assume for simplicity that $N$ and the bit representation is chosen such that
\newcommand\mathttt[1]{\texttt{$#1$}}
\[
    \bitalpha_0 = \texttt{0.0}\cdots\texttt{000},\ 
    \bitalpha_1 = \texttt{0.0}\cdots\texttt{001},\ 
    \bitalpha_2 = \texttt{0.0}\cdots\texttt{010},\ 
    \ldots,\ 
    \bitalpha_{N-1} = \texttt{0.1}\cdots\texttt{111}.
\]
This means each bit is equally likely, and is set to one for precisely half of the $N$ amplitudes; it immediately follows that $\bar\bitalpha_j = 2^{-j/2} N/2$, and thus
\[
    \| \bar\bitalpha \|_2 = \frac N2 \sqrt{\sum_{j=1}^g 2^{-j}} = \sqrt{1-2^{-g}} \frac N2.
\]
It follows that $L' = \BigO(1)$.

\begin{figure}[t]
    \hspace*{-1.5cm}\begin{minipage}{18cm}
    \includegraphics[width=6cm]{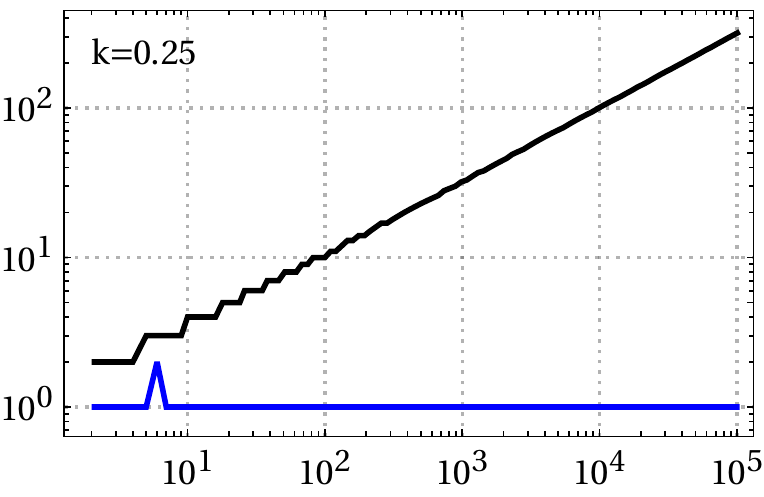}
    \includegraphics[width=6cm]{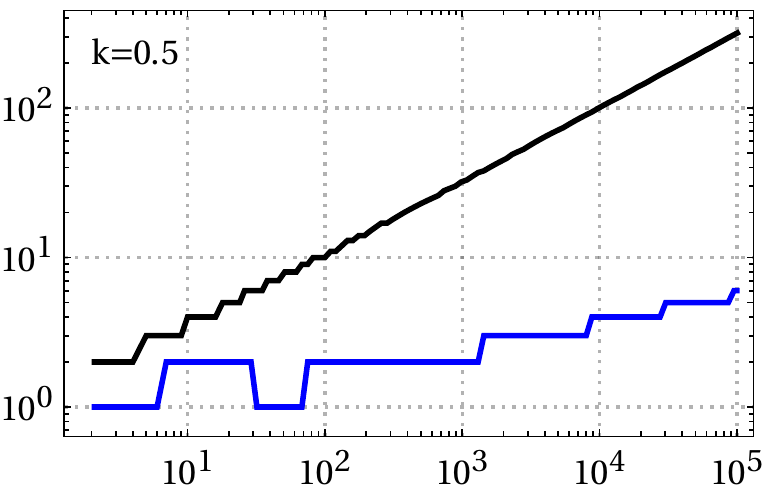}
    \includegraphics[width=6cm]{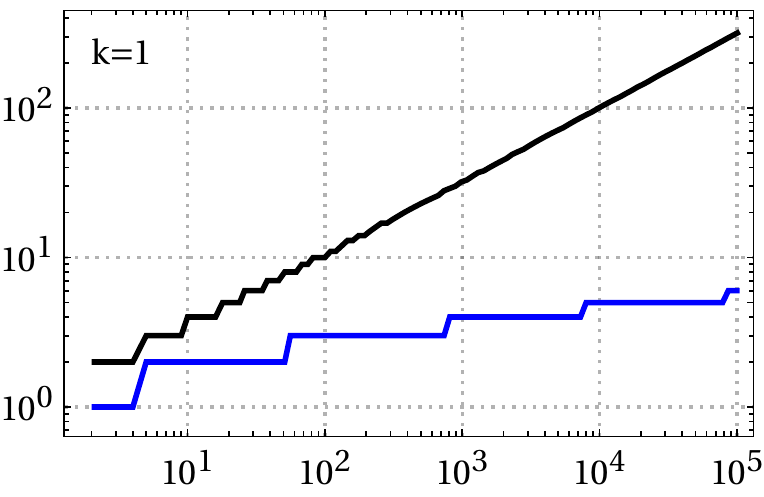}
    
    \includegraphics[width=6cm]{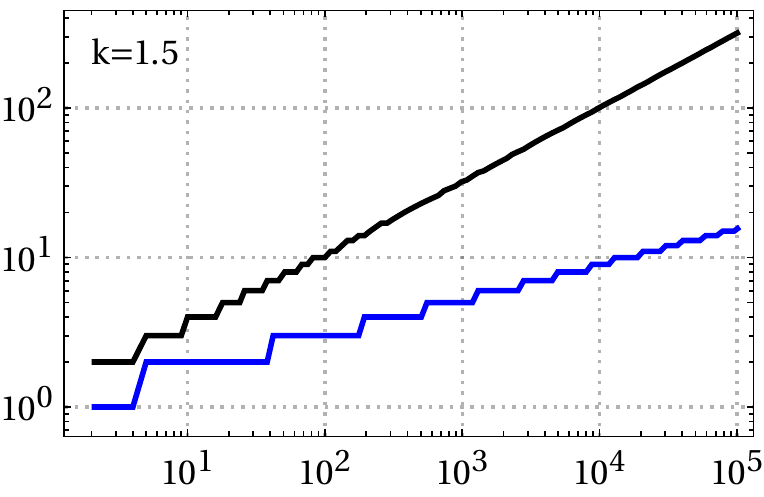}
    \includegraphics[width=6cm]{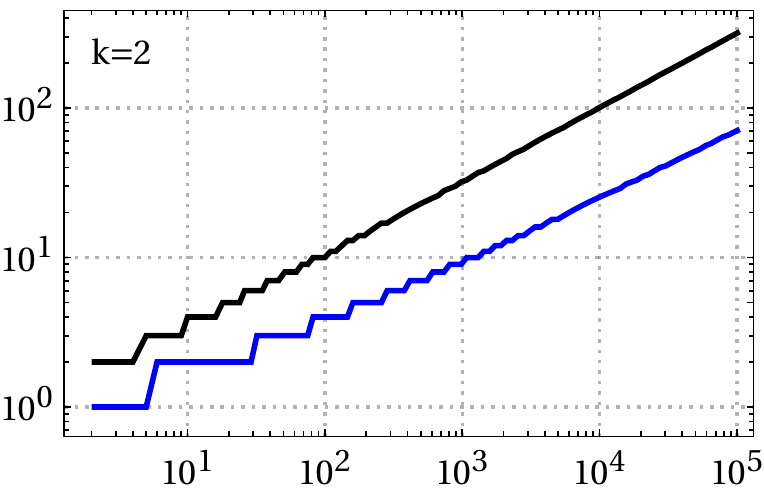}
    \includegraphics[width=6cm]{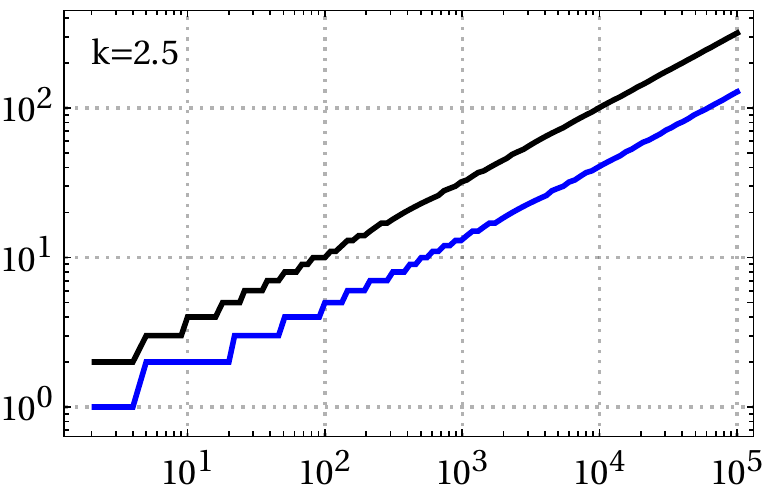}
    \end{minipage}
    \caption{Runtime $L$ (black, from \cref{eq:scaling}) and $L'$ (blue, from \cref{eq:scaling-2}) on vertical axis, evaluated empirically for a powerlaw distribution with various choices of the falloff exponent $k$, and various element counts $N$ on the horizontal axis.
    The kinks where the runtime decreases for increasing $N$ emerge from cases where we shifted the highest-significance bit to more effectively represent the distribution's dynamic range, as done and explained in the case of the uniform distribution.}
    \label{fig:powerlaw}
\end{figure}

\paragraph{Powerlaw Distribution.}
A powerlaw distribution is given by weights $\propto r^{-k}$, for $k\in(0,\infty)$, and where $k$ indicates the rank of the element (if sorted).
As our amplitude vector $\alpha$ is normalised by the $2$-norm, we will assume the task to be to load the amplitudes
\[
    \alpha_r = \frac{r^{-k}}{H_{N,2k}^{1/2}}
    \quad\text{for $r=1,\ldots,N$}
\]
and where the normalisation is given by the Harmonic number $H_{N,k} \coloneqq \sum_{r=1}^N r^{-k}$; then $\| \alpha \|_2=1$ as can be easily verified.
\Cref{fig:powerlaw} shows the resulting runtimes, for various powerlaw falloff choices.

\paragraph{Normal Distribution.}
\begin{figure}[t]
    \hspace*{-1.5cm}\begin{minipage}{18cm}
    \includegraphics[width=6cm]{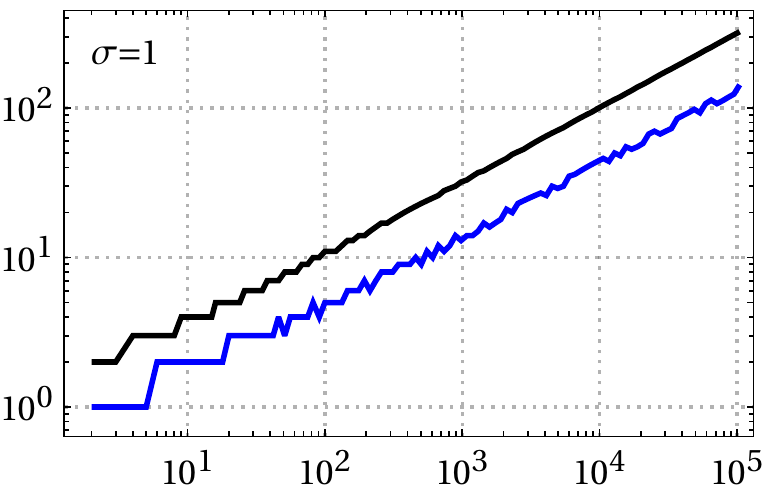}
    \includegraphics[width=6cm]{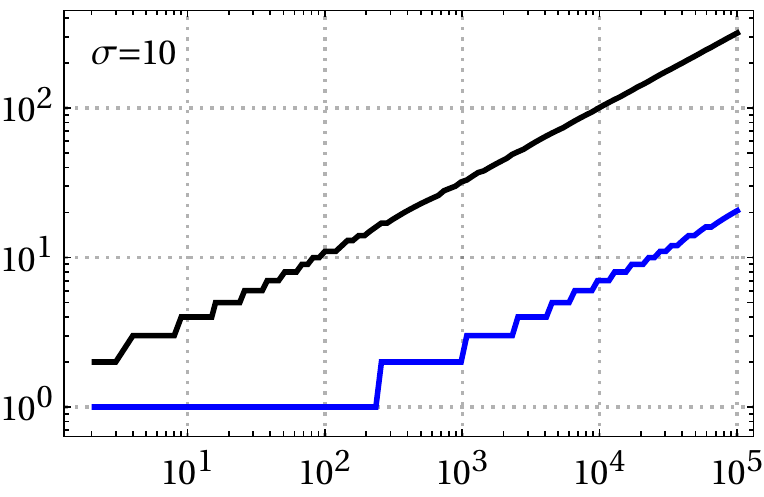}
    \includegraphics[width=6cm]{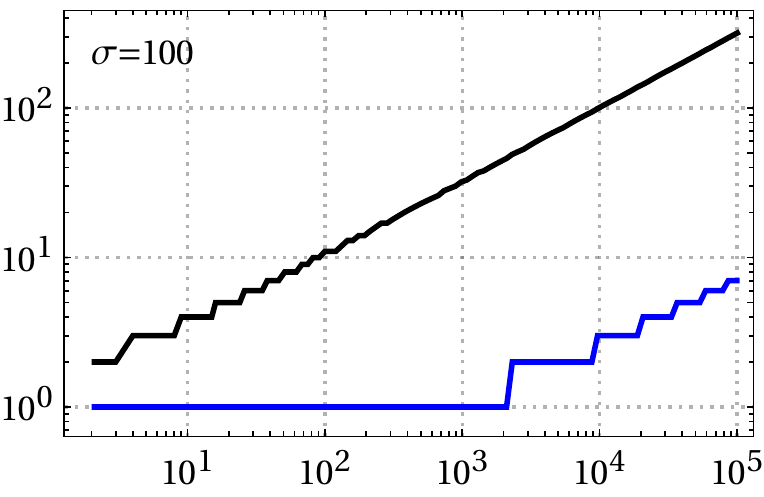}

    \end{minipage}
    \caption{Runtime $L$ (black, from \cref{eq:scaling}) and $L'$ (blue, from \cref{eq:scaling-2}) on vertical axis, evaluated empirically for a normal distribution with various choices of standard deviation $\sigma$, and various element counts $N$ on the horizontal axis.
    The kinks where the runtime decreases for increasing $N$ emerge from cases where we shifted the highest-significance bit to more effectively represent the distribution's dynamic range, as done and explained in the case of the uniform distribution.}
    \label{fig:normal}
\end{figure}

A normal distribution is given by weights $\propto \exp(-x^2/2\sigma^2)$, for $x\in\field R$, and where $\sigma>0$ denotes the standard deviation.
In our discrete case, we simply assume that $\alpha_x$ is proportional to this weight, normalised such that $\| \alpha \|_2=1$.
\Cref{fig:normal} shows the resulting runtimes, for various standard deviation choices.

The scaling suggests that if the standard deviation is large in the context of the number of samples, then the runtime is $\BigO(1)$; otherwise $\BigO(\sqrt{N/\sigma})$.

\paragraph{Random Distribution.}
\begin{figure}
    \centering
    \begin{tikzpicture}
    \node at (0, 0) {\includegraphics{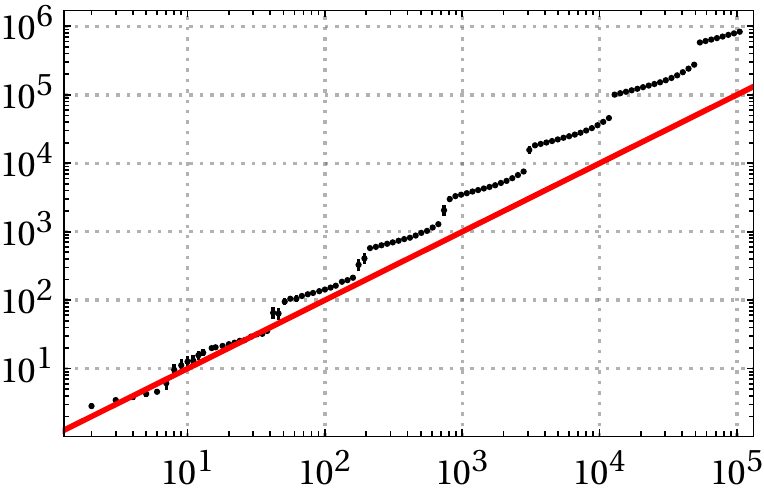}};
    \node at (-4.5, .5) {$\| \bar\bitalpha \|_2$};
    \node at (-2.8, -2.27) {$N=$};
    \end{tikzpicture}
    \caption{2-norm of average bit weight $\| \bar\bitalpha \|_2$ (black) for a random distribution, where amplitudes are drawn uniformly at random from $[0,1]$, and then normalized. As reference in red is the diagonal $N$, indicating that $\| \bar\bitalpha \|_2 = \Omega(N)$.}
    \label{fig:random}
\end{figure}
In case the distribution we wish to load is itself randomly sampled uniformly from the interval $[0,1]$---i.e.\ not with uniform weights, but where the $\alpha_i$ are random samples from $[0,1]$, and the overall vector normalised---one can calculate the runtime (in expectation). If $X_i\sim\operatorname{unif}(0, 1)$ are uniform iid random variables, we have
\[
    \| \alpha \|_1 = \frac{\sum_{i=0}^{N-1} \alpha_i}{\sqrt{\sum_{i=0}^{N-1} \alpha_i}} = \frac{ \sum_i \mathds E( X_i ) }{ \mathds E\left[ \sqrt{\sum_i X_i^2} \right] } \le \frac{N/2}{\sqrt{N}} = \sqrt N/2.
\]
We can check empirically how $\| \bar\bitalpha \|_2$ scales; this is shown in \cref{fig:random}, and the scaling suggests that $\| \bar\bitalpha \|_2 = \Omega(N)$.
As a result, by \cref{eq:scaling-2}, we have $L' = \BigO(1)$.

\paragraph{Sin Distribution.}
For discrete-time random walks, and as mentioned in the introduction, an interesting initial state to prepare is when the $\alpha_i = \sqrt{2/(n+1)} \times \sin(\pi(i+1)/(N+1))$.
A similar argument to the triangle distribution case shows $L'=\BigO(1)$.

\end{document}